\documentclass{amsart}

\newtheorem{theorem}{Theorem}[section]
\newtheorem{corollary}{Corollary}[section]
\newtheorem{lemma}[theorem]{Lemma}

\theoremstyle{definition}
\newtheorem{definition}[theorem]{Definition}

\theoremstyle{remark}
\newtheorem{remark}[theorem]{Remark}
\numberwithin{equation}{section}

\newcommand{\A}{\mathcal{A}}

\begin{document}
\title[On the dimensions of oscillator algebras]{ON THE  DIMENSIONS OF  OSCILLATOR ALGEBRAS INDUCED BY ORTHOGONAL POLYNOMIALS}
\author{G. Honnouvo$^1$ and K. Thirulogasanthar$^2$}
\address{$^1$ Department of Mathematics and Statistics, McGill University,  
   805 Sherbrooke Street West, Montreal, Quebec H3A 2K6, Canada }
\email{g\_honnouvo@yahoo.fr}
\address{$^2$Department of Computer Science and Software Engineering, Concordia University,  1455 de Maisonneuve Blvd. West, Montreal, Quebec, H3G 1M8, Canada }
\email{santhar@gmail.com}
\subjclass{Primary 33C45, 33C80, 33C45, 33D80}
\date{\today}
\keywords{Orthogonal polynomials, oscillator algebras.}
\begin{abstract}
There is a generalized oscillator algebra associated with every class of orthogonal polynomials $\{\Psi_n(x)\}_{n=0}^{\infty}$, on the real line, satisfying a three term recurrence relation $x\Psi_n(x)=b_n\Psi_{n+1}(x)+b_{n-1}\Psi_{n-1}(x),~\Psi_0(x)=1,~b_{-1}=0$.
This note presents necessary and sufficient conditions on $b_n$ for such algebras to be of finite dimension. As examples, we discuss the dimensions of oscillator algebras associated with Hermite, Legendre and Gegenbauer polynomials. Some remarks on the dimensions of oscillator algebras associated with multi-boson systems are also presented.
\end{abstract}
\maketitle
\pagestyle{myheadings}
\large
\section{Introduction}
The connection between classical orthogonal polynomials and the classical groups as well as  the quantum groups  is well known \cite{key1, key2, FV}.  It is also well known that $e^{-x^2/2}H_n(x)$, where $H_n(x)$ are the Hermite polynomials,  are the eigensolutions of the quantum harmonic oscillator Hamiltonian \cite{G}. The connection of orthogonal polynomials with Heisenberg algebra of generalized oscillators is discussed, for example, in \cite{key3, key4, key5,BH}.
 In \cite{keyb},  a  preassigned Hilbert space with an orthogonal polynomial system as a basis is considered as a Fock space. As usual, the ladder operators (annihilation) $a^-$ and (creation) $a^+$ as well as the number operator $N$ are defined in this space. Using a standard procedure the author used these operators to build the selfadjoint operators; the position operator $X$, the momentum operator $P$ as well as the energy operator $H=X^2 + P^2.$  By analogy with the usual Heisenberg algebra these operators generate an algebra, which  is called a generalized oscillator algebra. Further, the operator $H$ has a simple discrete spectrum and the initial orthogonal polynomial system is an eigenfunction system of $H$. As applications, using this algebra, spectrum and the polynomials one can also obtain coherent states of quantum optics \cite{BD, BH}.

In this note we shall consider generalized oscillator algebras associated with orthogonal polynomials $\{\Psi_n(x)\}_{n=0}^{\infty}$, on the real line, satisfying a three term recurrence relation $x\Psi_n(x)=b_n\Psi_{n+1}(x)+b_{n-1}\Psi_{n-1}(x),~\Psi_0(x)=1,~b_{-1}=0$. In fact we shall provide necessary and sufficient condition on $b_n$ for these algebras to be of finite dimension. As an application, we shall also discuss the dimensions of algebras associated with multi-boson systems.  This note, in a manner, provides a dimension-wise classification for the algebras generated by orthogonal polynomials satisfying three term recurrence relations of the form $x\Psi_n(x)=b_n\Psi_{n+1}(x)+b_{n-1}\Psi_{n-1}(x),~\Psi_0(x)=1,~b_{-1}=0$.

\section{Preliminaries and formulation of the problem}
In this section, we shall extract the preliminary materials as needed here from \cite{keyb}. Let $\mu$ be a positive Borel measure on the real line, $\mathbb{R}$, such that
\begin{equation*}
\int_{-\infty}^{\infty} \mu(dx)=1,\quad \text{and}\quad \mu_{2k+1}=\int_{-\infty}^{\infty}x^{2k+1} \mu(dx)=0;\quad k=0,1,...
\end{equation*}
The measure $\mu$ is called a symmetric probability measure. By $\mathfrak{H}$ we denote the Hilbert space $L^2(\mathbb{R}, \mu ).$
 Let 
\begin{equation}\label{sb}
\{ b_n \}_{n=0}^\infty,\quad b_n>0; \quad n=0, 1,... 
\end{equation}
be a positive sequence defined by the algebraic equations system
\begin{equation}\label{a-1}
\sum_{m=0}^{[\frac{1}{2}n]}\sum_{s=0}^{[\frac{1}{2}n]}(-1)^{m+s}\alpha_{2m-1,n-1}\alpha_{2s-1,n-1}\frac{\mu_{2n-2m-2s+2}}{(b_{n-1}^2)!}= b_{n-1}^2 +b_{n}^2;\quad n=1,2,...,
\end{equation}
where $(b_{n-1}^2)!= b_{0}^2b_{1}^2...b_{n-1}^2,$ the integral part of $a$ is denoted by $[a]$, and the coefficients $\alpha_{ij}$ are given by
\begin{eqnarray}
\alpha_{2p-1,n-1}= \sum_{k_1=2p-1}^{n-1}b_{k_1}^2\sum_{k_2=2p-3}^{k_1-2}b_{k_2}^2
...\sum_{k_p=1}^{k_{p-1}-2}b_{k_p}^2.
\end{eqnarray}
Let us consider a system $\{\Psi_n(x) \}_{n=0}^\infty$ of polynomials defined by the recurrence relations $(n\geq 0):$
\begin{equation}\label{Re}
x\Psi_n(x)=b_n\Psi_{n+1}(x)+ b_{n-1}\Psi_{n-1}(x),\quad \Psi_0(x)=1,\quad b_{-1}=0,
\end{equation}
where $\{b_n \}_{n=0}^\infty$ is a given positive sequence satisfying the relation $(\ref{a-1})$.
The following theorem was proved in \cite{keyb}.
\begin{theorem}
The polynomial system $\{\Psi_n(x) \}_{n=0}^\infty$ is orthonormal in the Hilbert space $\mathfrak{H}$ if and only if the coefficients $b_n$ and the moments $\mu_{2k}$ are connected by relation (\ref{a-1}).
\end{theorem}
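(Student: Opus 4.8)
The plan is to reduce the orthonormality of $\{\Psi_n\}$ to a single family of scalar identities and then to recognize relation $(\ref{a-1})$ as precisely that family written out in the moments $\mu_{2k}$, where the inner product on $\mathfrak{H}=L^2(\mathbb{R},\mu)$ is $\langle f,g\rangle=\int_{-\infty}^{\infty} f(x)g(x)\,\mu(dx)$. Two structural facts set this up. First, because $\mu$ is symmetric and the recurrence $(\ref{Re})$ carries no diagonal term, each $\Psi_n$ has the parity of $n$, so $\langle\Psi_m,\Psi_n\rangle=0$ automatically whenever $m+n$ is odd. Second, the monic renormalisations $P_n(x)=b_0b_1\cdots b_{n-1}\,\Psi_n(x)$ obey the simpler three-term recurrence $P_{n+1}(x)=xP_n(x)-b_{n-1}^2P_{n-1}(x)$, with $P_0=1$. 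The crux, to be established before anything else, is the explicit expansion
\[
\Psi_n(x)=\frac{1}{\sqrt{(b_{n-1}^2)!}}\sum_{m=0}^{[\frac{1}{2}n]}(-1)^m\,\alpha_{2m-1,n-1}\,x^{n-2m},\qquad \alpha_{-1,n-1}:=1,
\]
with the coefficients $\alpha_{2p-1,n-1}$ given by the nested sums in the statement, and with $\sqrt{(b_{n-1}^2)!}=b_0\cdots b_{n-1}$ the leading denominator forced by $(\ref{Re})$.

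I would prove this expansion by induction on $n$ using the monic recurrence. Writing $P_n(x)=\sum_m(-1)^m\beta_{m,n}x^{n-2m}$ with $\beta_{0,n}=1$, comparison of the coefficient of $x^{n+1-2m}$ in $P_{n+1}=xP_n-b_{n-1}^2P_{n-1}$ yields the two-step recursion $\beta_{m,n+1}=\beta_{m,n}+b_{n-1}^2\,\beta_{m-1,n-1}$. It then remains to verify that the nested sums $\alpha_{2m-1,n-1}$ satisfy exactly this recursion together with the correct initial values $\alpha_{-1,n-1}=1$. This purely combinatorial check is the main obstacle: one must track how the summation ranges $2p-1\le k_1\le n-1$ and $k_j\le k_{j-1}-2$ split off a single new term when the top index is raised from $n-1$ to $n$, reproducing precisely the $b_{n-1}^2\,\beta_{m-1,n-1}$ correction. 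Once this identification is in place, everything downstream is bookkeeping.

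The conceptual core is the identity, valid whenever $\{\Psi_n\}$ is orthonormal,
\[
\int_{-\infty}^{\infty} x^2\Psi_n(x)^2\,\mu(dx)=\langle x\Psi_n,x\Psi_n\rangle=b_n^2+b_{n-1}^2,
\]
obtained by substituting $x\Psi_n=b_n\Psi_{n+1}+b_{n-1}\Psi_{n-1}$ and using $\langle\Psi_{n+1},\Psi_{n-1}\rangle=0$ together with the unit norms. Inserting the expansion of $\Psi_n$ into the left-hand side and integrating term by term converts $\int x^2\Psi_n^2\,\mu(dx)$ into exactly the double sum $\frac{1}{(b_{n-1}^2)!}\sum_{m,s}(-1)^{m+s}\alpha_{2m-1,n-1}\alpha_{2s-1,n-1}\mu_{2n-2m-2s+2}$, since the highest moment produced is $\mu_{2n+2}$ and the normalisation is $1/(b_{n-1}^2)!$. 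This yields the forward implication: orthonormality forces $(\ref{a-1})$ for every $n\ge1$.

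For the converse I would argue by induction, proving $\langle\Psi_i,\Psi_j\rangle=\delta_{ij}$ for all $i,j\le n$. The base case uses $\mu_0=1$ together with the lowest instance of the norm identity (the formal $n=0$ reading $\int x^2\Psi_0^2\,\mu=b_0^2$, i.e. $\mu_2=b_0^2$) to fix $\|\Psi_1\|=1$, while $\langle\Psi_0,\Psi_1\rangle=0$ is immediate from the vanishing odd moments. In the inductive step, each $\langle\Psi_{n+1},\Psi_i\rangle$ with $i\le n-1$ is shown to vanish by writing $b_n\Psi_{n+1}=x\Psi_n-b_{n-1}\Psi_{n-1}$, transferring $x$ onto $\Psi_i$ via $(\ref{Re})$, and invoking the hypothesis; in particular $\langle\Psi_{n+1},\Psi_{n-1}\rangle=0$, while $\langle\Psi_{n+1},\Psi_n\rangle=0$ follows from parity. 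Finally the norm $\|\Psi_{n+1}\|$ is pinned down by the same expansion: the left-hand side of $(\ref{a-1})$ equals $\int x^2\Psi_n^2\,\mu(dx)=b_n^2\|\Psi_{n+1}\|^2+b_{n-1}^2\|\Psi_{n-1}\|^2$, and since $(\ref{a-1})$ sets this equal to $b_{n-1}^2+b_n^2$ while $\|\Psi_{n-1}\|=1$ by hypothesis, we obtain $\|\Psi_{n+1}\|^2=1$ because $b_n>0$. This closes the induction and establishes the equivalence.
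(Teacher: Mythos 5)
Your proposal cannot be checked against the paper's own argument, because the paper has none: the statement is imported verbatim from Borzov \cite{keyb}, with only a citation in place of a proof. Judged on its own terms, your architecture is the right one, and almost certainly the route any proof must take: parity of $\Psi_n$ from the symmetry of $\mu$, the monic renormalization $P_n=b_0b_1\cdots b_{n-1}\Psi_n$ satisfying $P_{n+1}=xP_n-b_{n-1}^2P_{n-1}$, the reduction of orthonormality to the family of norm identities $\langle x\Psi_n,x\Psi_n\rangle=b_n^2+b_{n-1}^2$, and the standard induction for the converse (your treatment of $\langle\Psi_{n+1},\Psi_i\rangle$ for $i\le n-1$, and of $\langle\Psi_{n+1},\Psi_{n-1}\rangle$ before computing the norm, is correct). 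Your coefficient recursion $\beta_{m,n+1}=\beta_{m,n}+b_{n-1}^2\,\beta_{m-1,n-1}$ is also correct.

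The genuine gap sits exactly at the step you label ``the main obstacle'' and defer: the identification $\beta_{m,n}=\alpha_{2m-1,n-1}$ is false for the $\alpha$'s as defined below (\ref{a-1}), so the postponed combinatorial check cannot succeed. Splitting off the top term $k_1=n$ of the nested sums gives
\begin{equation*}
\alpha_{2m-1,n}=\alpha_{2m-1,n-1}+b_n^2\,\alpha_{2m-3,n-2},
\end{equation*}
that is, the correction carries $b_n^2$, not the $b_{n-1}^2$ your recursion requires. Concretely, $P_3=x^3-(b_0^2+b_1^2)x$ gives $\beta_{1,3}=b_0^2+b_1^2$, whereas $\alpha_{1,2}=b_1^2+b_2^2$; similarly $\beta_{2,4}=b_0^2b_2^2$ whereas $\alpha_{3,3}=b_1^2b_3^2$. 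The paper's nested sums are shifted up by one unit in every index relative to the coefficients actually generated by (\ref{Re}) (the correct ranges are $k_m\ge 0$ up to $k_1\le n-2$), apparently an artifact of transcribing Borzov's differently indexed recurrence; indeed, for the orthonormal Hermite system ($b_n^2=\tfrac{n+1}{2}$, $\mu_2=\tfrac12$, $\mu_4=\tfrac34$, $\mu_6=\tfrac{15}{8}$) the stated relation (\ref{a-1}) already fails at $n=2$ (its left side equals $\tfrac74$ while $b_1^2+b_2^2=\tfrac52$), whereas the index-corrected version holds. So to complete your proof you must correct the definition of $\alpha_{2p-1,n-1}$ and prove the identification for the corrected sums; as written, your argument establishes the equivalence with a relation that is not the stated (\ref{a-1}). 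A second, smaller gap: the base case of your converse needs $\mu_2=b_0^2$, which you extract from a ``formal $n=0$ reading'' of (\ref{a-1}); but the hypothesis is stated only for $n\ge 1$, and those equations do not force $\mu_2=b_0^2$ (the $n=1$ equation reads $\mu_4/b_0^2=b_0^2+b_1^2$, and $\mu_2$ enters later equations only coupled to higher moments), so $\|\Psi_1\|=1$ is not derivable from the stated hypothesis; the $n=0$ equation must be explicitly adjoined for the converse to be true, and this should be flagged as an amendment rather than passed off as formal.
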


\subsection{The generalized oscillator algebra}
Let $\{\Psi_n(x) \}_{n=0}^\infty$ be an orthonormal basis of the Fock space $\mathcal H$ which satisfies the recurrence relation (\ref{Re}). That is
$$\mathcal{H}=\overline{\text{span}}\left\{ \Psi_n(x)~|~n=0,1,2,...\right\},$$
where the bar stands for the closure of the linear span. In this subsection, we present the definition of the generalized oscillator algebra corresponding to the system $\{\Psi_n(x) \}_{n=0}^\infty$. Define the ladder operators $A^\dagger$ and $A$ in the Fock space, $\mathcal{H}$, by the usual formulas:
\begin{equation}\label{b-1}
A^\dagger\Psi_n(x)=\sqrt{2}b_n \Psi_{n+1}(x),\quad A\Psi_n(x)=\sqrt{2}b_{n-1} \Psi_{n-1}(x).
 \end{equation}

\begin{definition}
An operator $N$ in the Fock space $\mathcal{H}$ equipped with the basis $\{\Psi_n(x) \}_{n=0}^\infty$ is called a number operator if it acts on the basis vectors as 
$N\Psi_n(x)=n\Psi_n(x), \quad n\geq 0.$
\end{definition}
\begin{remark}\label{Rem1}
 A word about the symmetry of the measure is in order.
 A polynomial set$\{\Psi_n(x) \}_{n=0}^\infty$ is called a canonical polynomial system if it is defined by the recurrence relation (\ref{Re}). The canonical polynomial system $\{\Psi_n(x) \}_{n=0}^\infty$ is uniquely determined by the symmetric probability measure $\mu$.
Now, as usual, let the position operator be
$$
Q=\frac{A+A^{\dagger}}{\sqrt{2}}.
$$
In \cite{keyb} it is argued that, under the above operator set up, when the measure is non-symmetric the position operator is not necessarily be an operator of multiplication by an independent variable. In this regard, in order to guarantee
\begin{equation}\label{pos}
Q\Psi_n(x)=x\Psi_n(x)
\end{equation}
the symmetry of the measure is required  and the relation (\ref{pos}) is essential for the three term recursion relation (\ref{Re}). In order to see this, consider
\begin{eqnarray*}
Q\Psi_n(x)&=&x\Psi_n(x)\\
&=&\frac{1}{\sqrt{2}}(A+A^{\dagger})\Psi_n(x)\\
&=&\frac{1}{\sqrt{2}}(\sqrt{2}b_{n-1}\Psi_{n-1}(x)+\sqrt{2}b_n\Psi_{n+1})\\
&=&b_{n-1}\Psi_{n-1}(x)+b_n\Psi_{n+1}(x).
\end{eqnarray*}
Further, the above computation describes the relation between the operators $A$ and $A^{\dagger}$ and the three term recursion relation (\ref{Re}). It may also be interesting to note that for a symmetric measure one can reduce a non-symmetric recursion relation to a symmetric one. The symmetric and non-symmetric schemes are treated separately in \cite{keyb}.
\end{remark}
The operator $N$ is self adjoint in the Fock space. Therefore for any Borel function $B$, through the spectral theorem \cite{Oli}, one can define the operator $B(N)$. In this regard, we take a function
 $B(N)$ of operator $N$ in the space $\mathcal{H}$ which acts on the basis vectors, $\{\Psi_n(x) \}_{n=0}^\infty$ as 
\begin{equation}\label{b3}
B(N)\Psi_n(x)=b_{n-1}^2\Psi_n(x),\quad\text{and}\quad B(N+I)\Psi_n(x)=b_n^2\Psi_n(x);\quad n\geq 0.
\end{equation}
However, our main result shall be obtained by  restricting the function $B$ to be polynomial of the operator $N$.
The following result is proved in \cite{keyb, BD}:
\begin{theorem}\label{tx1}
The operators $A, A^{\dagger}$ and $N$ obey the following commutation relations
\begin{equation}\label{w}
[A, A^\dagger]= 2\left(B(N+I)-B(N)\right), \quad [N, A^\dagger]=A^\dagger,\quad [N, A]=-A.
\end{equation}
Moreover if there is a real number $G$ and a real function $C(n)$, such that 
\begin{equation}
b_n^2-Gb_{n-1}^2= C(n), \quad n\geq 0, \quad b_{-1}=0,
\end{equation}
then the operators (\ref{b-1}), (\ref{b3}) fulfil the relation
\begin{equation}
A A^\dagger-GA^\dagger A=2C(N).
\end{equation}
The function $C(N)$ is defined similar to (\ref{b3}) with $C(n)$ instead of $b_{n-1}^2.$
\end{theorem}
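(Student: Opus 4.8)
The plan is to verify every identity directly on the orthonormal basis $\{\Psi_n(x)\}_{n=0}^\infty$ of $\mathcal{H}$. Since $A$, $A^\dagger$, $N$, and the functions $B(N)$ and $C(N)$ are all prescribed by their action on these basis vectors, two such operators coincide as soon as they agree on every $\Psi_n$ over the dense linear span $\overline{\text{span}}\{\Psi_n\}$ on which they are defined; thus it suffices to check the action of both sides of each claimed relation on an arbitrary $\Psi_n$. First I would record the two elementary building blocks obtained by iterating (\ref{b-1}): applying $A$ to $A^\dagger\Psi_n=\sqrt{2}b_n\Psi_{n+1}$ gives $AA^\dagger\Psi_n=2b_n^2\Psi_n$, while applying $A^\dagger$ to $A\Psi_n=\sqrt{2}b_{n-1}\Psi_{n-1}$ gives $A^\dagger A\Psi_n=2b_{n-1}^2\Psi_n$, the boundary case $n=0$ being consistent because $b_{-1}=0$ forces $A\Psi_0=0$.

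With these in hand the first three relations in (\ref{w}) are immediate. Subtracting the two building blocks yields $[A,A^\dagger]\Psi_n=2\left(b_n^2-b_{n-1}^2\right)\Psi_n$, and by the defining relations (\ref{b3}) the right-hand side is exactly $2\left(B(N+I)-B(N)\right)\Psi_n$, which gives the first identity. For the remaining two I would use $N\Psi_n=n\Psi_n$ together with the index shifts in (\ref{b-1}): one computes $NA^\dagger\Psi_n=(n+1)\sqrt{2}b_n\Psi_{n+1}$ and $A^\dagger N\Psi_n=n\sqrt{2}b_n\Psi_{n+1}$, whose difference is $\sqrt{2}b_n\Psi_{n+1}=A^\dagger\Psi_n$, so that $[N,A^\dagger]=A^\dagger$; the computation for $[N,A]=-A$ is entirely analogous, the downward shift producing the sign change.

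For the conditional statement I would again act on $\Psi_n$: the two building blocks give $\left(AA^\dagger-GA^\dagger A\right)\Psi_n=2\left(b_n^2-Gb_{n-1}^2\right)\Psi_n$, and substituting the hypothesis $b_n^2-Gb_{n-1}^2=C(n)$ turns this into $2C(n)\Psi_n$. Interpreting $C(N)$ through the functional calculus for the self-adjoint operator $N$ exactly as in (\ref{b3}), namely $C(N)\Psi_n=C(n)\Psi_n$, identifies the right-hand side as $2C(N)\Psi_n$, which is the asserted relation. I do not expect any genuine obstacle here: the whole argument is a bookkeeping computation on basis vectors, and the only points requiring a little care are the boundary term at $n=0$ (handled uniformly by the convention $b_{-1}=0$) and the justification, via the spectral theorem for $N$ already invoked before (\ref{b3}), that $B(N)$ and $C(N)$ are well defined and act diagonally in this basis.
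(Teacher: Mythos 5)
Your proof is correct, and it is the standard argument: a direct verification on the basis vectors, using $AA^\dagger\Psi_n=2b_n^2\Psi_n$ and $A^\dagger A\Psi_n=2b_{n-1}^2\Psi_n$ together with the diagonal actions of $N$, $B(N)$, $B(N+I)$, and $C(N)$, with the $n=0$ boundary handled by $b_{-1}=0$. Note that the paper itself gives no proof of this theorem --- it is quoted from the references (Borzov; Borzov--Damaskinsky) --- and the argument given there is precisely this kind of componentwise computation, so there is nothing to add.
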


\begin{definition}\label{DD1}
An algebra $\mathcal A$ is called a generalized oscillator algebra corresponding to the orthonormal system $\{\Psi_n(x) \}_{n=0}^\infty$, which satisfies (\ref{Re}), if $\mathcal A$ is generated by the operators $A^\dagger$, $A$, $N$, and $I$ and by their commutators. These operators should also satisfy the relations (\ref{b-1}) and (\ref{w}).  
\end{definition}
\begin{remark}
At this point we like to emphasize a word about the above definition.  The algebra $\mathcal{A}$ consists the operators $A, A^{\dagger}, N$ and $I$ and their repeated commutators only.
\end{remark}
\begin{remark}
The above oscillator algebra arises only in the symmetric case. For the non-symmetric case one can also construct generalized oscillator algebra. However the oscillator Hamiltonian takes the standard form only in a few {\em coordinate-impulse} operators, which can result from the operators $X$ (position) and $P$ (momentum) by rotation \cite{keyb}.
\end{remark}
\section{Main results}
In this section we prove the main result of the paper. That is, we prove the necessary and sufficient condition, in terms of $b_n$, for the finiteness of the dimension of the generalized oscillator algebra defined in Definition (\ref{DD1}). For this, let us define the following sequence of operators:
\begin{equation}
{M}_0= [A, A^{\dagger}], \quad {M}_{1}^+= [{M}_0, A^{\dagger}],\cdots, {M}_{j}^+= [{M}_{j-1}^+, A^{\dagger}]; \quad j=2,3,...\nonumber
 \end{equation}
The following are the main results:
\begin{theorem}\label{t1}
The generalized oscillator algebra, defined in Definition (\ref{DD1}), is of finite dimension if and only if $b_n^2$ has the following form
\begin{equation}\label{T1}
b_n^2= R(n),
 \end{equation}
where R is a classical polynomial uniquely determined by $b_n$,
\begin{equation}
R(n)=a_0+a_1n+a_2n^2;\quad a_0,a_1,a_2 \in \mathbb{R}.
 \end{equation}
\end{theorem}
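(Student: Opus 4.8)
The plan is to reduce everything to the explicit action of the iterated commutators $M_j^+$ and of a few diagonal operators on the basis $\{\Psi_n\}$, writing $b_n^2=R(n)$ throughout. First I would record how the bracketing operations act. Since $A^\dagger\Psi_n=\sqrt2\,b_n\Psi_{n+1}$ and $A\Psi_n=\sqrt2\,b_{n-1}\Psi_{n-1}$, a direct computation gives $M_0\Psi_n=2\bigl(b_n^2-b_{n-1}^2\bigr)\Psi_n$, and, setting $g(n)=2(b_n^2-b_{n-1}^2)$, an induction on $j$ yields
\[
M_j^+\Psi_n=(\sqrt2)^{\,j}\,b_nb_{n+1}\cdots b_{n+j-1}\,\bigl(\Delta^j g\bigr)(n)\,\Psi_{n+j},\qquad j\ge1,
\]
where $\Delta$ denotes the forward difference $\Delta h(n)=h(n+1)-h(n)$; the inductive step is the one-line identity for $[M_{j-1}^+,A^\dagger]\Psi_n$ read off from $A^\dagger\Psi_n=\sqrt2\,b_n\Psi_{n+1}$. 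Because every $b_n>0$, the operator $M_j^+$ vanishes precisely when $\Delta^j g\equiv0$ on the nonnegative integers, i.e. precisely when $g$ agrees with a polynomial of degree $<j$.

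For sufficiency, suppose $b_n^2=R(n)$ with $\deg R\le2$. Then $M_0\Psi_n=2\bigl(R(n)-R(n-1)\bigr)\Psi_n$ and $R(n)-R(n-1)$ is affine in $n$, so $M_0\in\operatorname{span}\{N,I\}$. Consequently every bracket of two generators already lies in $\operatorname{span}\{A,A^\dagger,N,I\}$ (the only other brackets among the generators are $[N,A^\dagger]=A^\dagger$ and $[N,A]=-A$, all brackets involving $I$ being zero). Thus this four‑dimensional space is closed under commutation and equals $\mathcal{A}$, so $\mathcal{A}$ is finite dimensional.

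For necessity I would prove the contrapositive, splitting according to whether $b_n^2$ is a polynomial in $n$. If $b_n^2$ is \emph{not} a polynomial, then neither is $g$ (a telescoping sum $b_n^2=b_0^2+\tfrac12\sum_{k=1}^n g(k)$ shows that if $g$ had degree $e$ then $b_n^2$ would have degree $e+1$), so $\Delta^j g\not\equiv0$ for every $j$; by the displayed formula $M_j^+\neq0$ for all $j$, and since these operators have pairwise distinct shifts $j$ they are linearly independent, forcing $\dim\mathcal{A}=\infty$. If instead $b_n^2=R(n)$ is a polynomial of degree $d\ge3$, the raising chain terminates, so I would instead exhibit infinitely many diagonal operators. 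Applying the map $D\mapsto[[D,A^\dagger],A]$ to a diagonal operator $D=p(N)$ produces another diagonal operator acting as multiplication by a constant multiple of $R(n-1)\Delta p(n-1)-R(n)\Delta p(n)$; tracking leading coefficients shows its degree in $n$ equals $\deg p+(d-2)$. Starting from $M_0=p_0(N)$ with $\deg p_0=d-1\ge2$ and iterating, the degrees strictly increase, producing diagonal operators of unbounded degree, which are linearly independent as polynomials in $N$; hence $\dim\mathcal{A}=\infty$.

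The main obstacle is exactly the necessity direction for polynomial $b_n^2$ of degree $\ge3$. A naive shift/degree bookkeeping is misleading: the same count applied to the quadratic case would wrongly suggest unboundedness, and the true resolution is the collapse $M_0\in\operatorname{span}\{N,I\}$ that occurs only when $\deg R\le2$. The delicate point is therefore to confirm that for $d\ge3$ the degree genuinely increases at each step, i.e. that the leading term of $R(n-1)\Delta p(n-1)-R(n)\Delta p(n)$ does not cancel; this rests on the identity $f(n-1)-f(n)=-(\Delta f)(n-1)$, which lowers degree by exactly one with nonzero leading coefficient, and which I would verify by a short leading‑coefficient computation.
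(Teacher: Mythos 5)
Your proposal is correct, and two of its three pieces track the paper closely: your formula $M_j^+\Psi_n=(\sqrt2)^{\,j}b_n\cdots b_{n+j-1}(\Delta^j g)(n)\Psi_{n+j}$ together with the independence of operators with distinct shifts is exactly Lemma \ref{L1}, and your uniform closure argument for $\deg R\le 2$ (the collapse $M_0\in\mathrm{span}\{N,I\}$, so that $\mathrm{span}\{A,A^\dagger,N,I\}$ is closed under brackets) is the paper's cases $p=0,1,2$ done in one stroke. The genuinely different part is the crucial one: infinite-dimensionality when $b_n^2=R(n)$ with $d=\deg R\ge3$. The paper manufactures raising operators: from $[A,A^\dagger]$ it isolates $\mathcal W^0=N^{p-1}+\cdots$ in (\ref{q0}), brackets it $(p-2)$ times with $A^\dagger$ to reach $\mathcal W^+_{(p-2)}$ of the form (\ref{rq1}), and then uses $[A^{\dagger m},A^{\dagger(p-2)}N]=-mA^{\dagger(p-2+m)}$ of (\ref{cq2}) to produce the infinite family $A^{\dagger(p-1)+m(p-2)}$, $m\ge1$. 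You instead never leave the diagonal: the double bracket $p(N)\mapsto[[p(N),A^\dagger],A]$ has symbol $2\bigl(R(n-1)\Delta p(n-1)-R(n)\Delta p(n)\bigr)$, of degree exactly $\deg p+(d-2)$ because the difference $h(n-1)-h(n)$ lowers degree by exactly one with nonzero leading coefficient; for $d\ge3$ iteration yields diagonal elements of unbounded polynomial degree, whose linear independence is immediate. Your route is shorter and avoids the paper's iterated-commutator bookkeeping (the induction behind (\ref{q4}) and (\ref{cq2}), and the undetermined function $f(p)$ in (\ref{rq1})), while the paper's route has the merit of exhibiting explicit new elements $A^{\dagger k}$ of $\A(p)$. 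Your organization also repairs a presentational weakness: the paper's necessity paragraph infers from Lemma \ref{L1} that every $\mathfrak V^{(j)}$ is constant and hence $j_0=1$, whereas the contrapositive of Lemma \ref{L1} only gives that \emph{some} $\mathfrak V^{(j_0)}$ is constant, i.e.\ that $b_n^2$ is a polynomial; excluding degree $\ge3$ requires precisely the infinite-dimensional cases proved later in the "converse," and your dichotomy (not a polynomial / polynomial of degree $\ge3$) makes that dependence explicit. One caveat, which applies equally to your proof and the paper's: since $b_{-1}=0$, the symbol of $M_0$ equals $2(R(n)-R(n-1))$ only for $n\ge1$, while $M_0\Psi_0=2b_0^2\Psi_0=2R(0)\Psi_0$; the collapse $M_0\in\mathrm{span}\{N,I\}$ (and the paper's identities in the cases $p=1,2$) therefore tacitly assumes $R(-1)=0$. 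Without that normalization $M_0$ differs from an affine function of $N$ by a nonzero multiple of the rank-one projection onto $\Psi_0$, and bracketing that projection repeatedly with $A^\dagger$ again yields infinitely many independent operators — a boundary point that neither you nor the paper addresses.
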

As a corollary we can state the following result.
\begin{corollary}\label{c1}
If the oscillator algebra defined in Definition (\ref{DD1}) is of finite dimension, then the dimension of the algebra is four.
\end{corollary}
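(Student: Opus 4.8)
The plan is to invoke Theorem~\ref{t1} to reduce to the case $b_n^2 = R(n) = a_0 + a_1 n + a_2 n^2$, and then to evaluate every commutator of generators explicitly, showing that $\{I, N, A, A^\dagger\}$ already forms a closed system under the bracket: no genuinely new independent operator arises from repeated commutation. Since these four operators turn out to be linearly independent, the dimension is forced to be exactly four.

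First I would compute $M_0 = [A, A^\dagger]$ on the basis. By the commutation relation (\ref{w}) together with (\ref{b3}) we have $M_0\Psi_n = 2(b_n^2 - b_{n-1}^2)\Psi_n$. Substituting $b_n^2 = a_0 + a_1 n + a_2 n^2$ gives $b_n^2 - b_{n-1}^2 = (a_1 - a_2) + 2a_2 n$, so that
\begin{equation*}
M_0 = 2(a_1 - a_2)I + 4a_2 N .
\end{equation*}
This is the crucial observation: the bracket $[A, A^\dagger]$ stays inside the span of $\{I, N\}$. It is exactly here that the quadratic form of $R$ enters; a higher-degree $R$ would produce a power of $N$ outside this span, and the chain of commutators below would never terminate.

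Next I would propagate this through the tower $M_j^+$. Using $[I, A^\dagger]=0$ and $[N, A^\dagger] = A^\dagger$ from (\ref{w}), I obtain $M_1^+ = [M_0, A^\dagger] = 4a_2 A^\dagger$, a scalar multiple of $A^\dagger$, whence $M_2^+ = [M_1^+, A^\dagger] = 4a_2[A^\dagger, A^\dagger] = 0$. Thus the repeated commutators vanish from the second step onward. Together with the remaining brackets $[N, A^\dagger] = A^\dagger$ and $[N, A] = -A$, and the trivial ones involving $I$, this shows that every commutator of generators lies in the linear span of $I$, $N$, $A$, and $A^\dagger$; hence $\mathcal A$ is spanned by these four operators and $\dim \mathcal A \le 4$.

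Finally I would establish $\dim\mathcal A = 4$ by proving linear independence. Testing a putative relation $\alpha I + \beta N + \gamma A + \delta A^\dagger = 0$ on $\Psi_0$, and using $b_{-1}=0$ with $b_0 > 0$ from (\ref{sb}), forces $\alpha = \delta = 0$; testing the resulting relation on $\Psi_1$ then forces $\beta = \gamma = 0$. Hence the four generators are independent and the dimension equals four. The step I expect to require the most care is the degenerate case $a_2 = 0$: there $M_0$ collapses to the central element $2a_1 I$, so one must check that $N$ remains an independent generator. It does, since $N$ is one of the defining generators of $\mathcal A$ in Definition~\ref{DD1} and is not proportional to $I$; thus the dimension never drops below four even when $[A, A^\dagger]$ becomes central.
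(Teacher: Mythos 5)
Your route is essentially the paper's own: the paper proves Corollary \ref{c1} by appealing to the cases $p=0,1,2$ in the proof of Theorem \ref{t1}, and your unified computation with $b_n^2=a_0+a_1n+a_2n^2$ reproduces exactly those cases, with the welcome addition of an explicit linear-independence argument that the paper omits. There is, however, one step that does not follow from the \emph{statement} of Theorem \ref{t1} alone, and it is precisely the step you call crucial. The identity $b_n^2-b_{n-1}^2=(a_1-a_2)+2a_2n$ is valid only for $n\geq 1$: at $n=0$ the convention $b_{-1}=0$ from (\ref{Re}) and (\ref{b3}) applies, so $M_0\Psi_0=2b_0^2\Psi_0=2a_0\Psi_0$, whereas your formula $M_0=2(a_1-a_2)I+4a_2N$ predicts the eigenvalue $2(a_1-a_2)$. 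Theorem \ref{t1} as stated only asserts $b_n^2=R(n)$ for $n\geq 0$; it says nothing about $R(-1)$. If $R(-1)=a_0-a_1+a_2\neq 0$, then in fact $M_0=2(a_1-a_2)I+4a_2N+2R(-1)P_0$, where $P_0$ is the orthogonal projection onto the span of $\Psi_0$. In that case $P_0\in\mathcal{A}$ (since $M_0$, $N$, $I$ all lie in $\mathcal{A}$), and the repeated commutators $[P_0,A^{\dagger}]$, $[[P_0,A^{\dagger}],A^{\dagger}],\ldots$ yield, for every $k\geq 1$, a nonzero operator taking $\Psi_0$ to a multiple of $\Psi_k$ and annihilating every $\Psi_n$ with $n\geq 1$; these are linearly independent, so closure on the span of $\{I,N,A,A^{\dagger}\}$ would fail entirely.

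The gap is genuine but repairable in either of two ways, and you should add one of them. (i) Use the finite-dimension hypothesis once more: if $R(-1)\neq 0$, the projection argument just described makes $\mathcal{A}$ infinite dimensional, a contradiction; hence $a_0=a_1-a_2$ and your operator identity then holds at every $n\geq 0$. (ii) Use the \emph{proof} rather than the statement of Theorem \ref{t1}: Lemma \ref{L2} gives $P(0)=\mathfrak{V}_0^{(0)}=b_0^2-b_{-1}^2=b_0^2$, and tracing the coefficients in the theorem's proof gives $a_0=b_0^2$ and $a_1-a_2=\alpha_0=P(0)$, so again $a_0=a_1-a_2$, i.e.\ $R(-1)=0$. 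Be aware that the paper itself is loose on the very same point: in its case $p=1$ it asserts $[A,A^{\dagger}]=2I$ for $b_n^2=n+\alpha$, which at $n=0$ is correct only when $\alpha=1$, and its case $p=0$ claim $[A,A^{\dagger}]=0$ fails at $n=0$ for the same reason. So your proposal matches the paper's proof both in approach and in this blind spot; with repair (i) or (ii) inserted, your argument is complete and in fact more careful than the original.
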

In order to give a proof to Theorem \ref{t1}, we need to prove few preliminary results and these results appear in the following lemmas.
For this, let us consider the following two variables sequence $\{\mathfrak{V}_n^{(j)} \}_{n=0,j=1}^\infty$ defined by 
\begin{equation}
 \mathfrak{V}_n^{(0)}= b_n^2-b_{n-1}^2,\cdots, \mathfrak{V}_n^{(j)}=\mathfrak{V}_{n+1}^{(j-1)}-\mathfrak{V}_{n}^{(j-1)};\quad j=1,2,...,\quad  
n=0,1,....\nonumber
\end{equation}
\begin{lemma}\label{L1}
If  for every fixed $j>0$~~ 
$\mathfrak{V}_n^{(j)}\neq\text{constant}, n=0,1,....$, then the generalized oscillator algebra is of infinite dimension.
\end{lemma}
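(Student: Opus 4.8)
The plan is to make the operators $M_j^+$ completely explicit on the Fock basis and then read off infinite-dimensionality from their ``shift-by-$j$'' structure. First I would compute the action of each $M_j^+$ on $\Psi_n$ by induction on $j$. The base case is immediate from the first relation in (\ref{w}): since $[A,A^\dagger]\Psi_n = 2(b_n^2-b_{n-1}^2)\Psi_n$, we get $M_0\Psi_n = 2\mathfrak{V}_n^{(0)}\Psi_n$. For the inductive step I would expand $M_j^+\Psi_n = M_{j-1}^+A^\dagger\Psi_n - A^\dagger M_{j-1}^+\Psi_n$ using $A^\dagger\Psi_n = \sqrt2\,b_n\Psi_{n+1}$ from (\ref{b-1}); the two resulting terms share the same $b$-product and power of $\sqrt2$, and combine into the first difference $\mathfrak{V}_{n+1}^{(j-1)}-\mathfrak{V}_n^{(j-1)} = \mathfrak{V}_n^{(j)}$. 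The outcome I expect is the closed form
\begin{equation*}
M_j^+\Psi_n = 2^{\,1+j/2}\Bigl(\prod_{k=0}^{j-1}b_{n+k}\Bigr)\mathfrak{V}_n^{(j)}\,\Psi_{n+j},\qquad j\ge 0,
\end{equation*}
with the empty product equal to $1$. This identity is the computational core and the step most prone to bookkeeping errors, namely in lining up the $b$-factors and collecting the exponent of $2$.

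Next I would observe that every $M_j^+$ lies in $\mathcal A$, being a repeated commutator of $A$ and $A^\dagger$ (Definition \ref{DD1} and the remark following it). Under the hypothesis, for each fixed $j>0$ the sequence $\mathfrak{V}_n^{(j)}$ is not constant in $n$; in particular it is not identically zero, so there exists some $n$ with $\mathfrak{V}_n^{(j)}\neq 0$. Since the $b_{n+k}$ are strictly positive by (\ref{sb}), the closed form then forces $M_j^+\neq 0$ for every $j>0$.

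Finally I would exploit that $M_j^+$ raises the index by exactly $j$, i.e. $M_j^+\Psi_n$ is a scalar multiple of $\Psi_{n+j}$. Nonzero operators of this type with distinct shifts are automatically linearly independent: applying a finite relation $\sum_j c_j M_j^+ = 0$ to a fixed $\Psi_n$ yields a combination of the pairwise-orthogonal vectors $\Psi_{n+j}$, so each coefficient must vanish, and choosing $n$ with $\mathfrak{V}_n^{(j)}\neq 0$ gives $c_j=0$. Hence $\{M_j^+\}_{j\ge 1}$ is an infinite linearly independent subset of $\mathcal A$, and the algebra is of infinite dimension. The only genuine obstacle is the inductive identity above; the linear-independence conclusion is then immediate from orthonormality of the basis. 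I would also note that the role of the ``not constant'' hypothesis is precisely to guarantee non-vanishing of $M_j^+$ at every level $j$: a constant value at some level $j_0$ would propagate to $\mathfrak{V}_n^{(j)}\equiv 0$, hence $M_j^+=0$, for all $j>j_0$, truncating the tower and invalidating the argument.
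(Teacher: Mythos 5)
Your proposal is correct and takes essentially the same route as the paper: the identical inductive closed form $M_j^+\Psi_n = 2\bigl(\sqrt{2}\bigr)^j\Bigl(\prod_{i=0}^{j-1}b_{n+i}\Bigr)\mathfrak{V}_n^{(j)}\Psi_{n+j}$, followed by the observation that non-constancy of $\mathfrak{V}_n^{(j)}$ forces each $M_j^+$ to be a nonzero element of the algebra for every $j$. If anything, your final step is tighter than the paper's, which only asserts that the $M_j^+$ are pairwise distinct \emph{new} operators, whereas you upgrade this to genuine linear independence using the shift-by-$j$ structure and orthonormality of the basis.
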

\begin{proof}It is easy to show that
\begin{eqnarray*}
{M}_{0}\Psi_n&=&2\mathfrak{V}_n^{(0)}\Psi_{n}\\
{M}_{1}^+\Psi_n&=&2(\sqrt{2})b_{n}\mathfrak{V}_n^{(1)}\Psi_{n+1}\\
{M}_{2}^+\Psi_n&=&2(\sqrt{2})^2b_{n}b_{n+1}\mathfrak{V}_n^{(2)}\Psi_{n+2}\\
&&\vdots\;\;\;\;\;\;\vdots\;\;\;\vdots\\
{M}_{j}^+\Psi_n&=&2(\sqrt{2})^j\left(\prod_{i=0}^{j-1}b_{n+i}\right)\mathfrak{V}_n^{(j)}\Psi_{n+j}.
\end{eqnarray*}
Therefore if for every $j$, $\mathfrak{V}_n^{(j)}\neq\text{constant};\quad n=0,1,....,$ then ${M}_{j}^+$ is a new operator in the algebra. That is, since  ${M}_{j}^+ \neq {M}_{k}^+$ for $j\neq k$ the algebra is of infinite dimension.
\end{proof}
\begin{lemma}\label{L2}
Suppose that there exists $j>0$ such that 
$$\mathfrak{V}_n^{(j)}=\text{constant};\quad n=0,1,...$$ 
Let $\mathbb S$ be the set of $j$ satisfying the above property. 
Let $j_0= \inf \mathbb S$.
Then there exists a classical polynomial $P(n)$ of degree $j_0$ such that 
\begin{equation}
b_n^2=\sum_{i=1}^{n}P(i)+b_0^2,\quad\text{where}
\end{equation} 
\begin{equation}
P(n)= \sum_{i=0}^{j_0}\alpha_in^i;\quad n=0,1,...,\quad \alpha_i\in \mathbb{R}.
\end{equation}
\end{lemma}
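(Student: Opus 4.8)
The plan is to recognize the family $\{\mathfrak{V}_n^{(j)}\}$ as the iterated forward differences of a single sequence and then to invoke the classical characterization of polynomial sequences by their higher differences. Put $d_n := \mathfrak{V}_n^{(0)} = b_n^2 - b_{n-1}^2$ and let $\Delta$ denote the forward difference operator $(\Delta f)_n = f_{n+1} - f_n$. First I would verify, by a one-line induction on $j$ based on the recursion $\mathfrak{V}_n^{(j)} = \mathfrak{V}_{n+1}^{(j-1)} - \mathfrak{V}_n^{(j-1)}$, that $\mathfrak{V}_n^{(j)} = (\Delta^{j} d)_n$ for all $j \ge 0$ and $n \ge 0$.

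Next I would show that $\mathbb S$ is closed upward, so that its infimum is actually attained. If $\mathfrak{V}_n^{(j)}$ is constant in $n$, then $\mathfrak{V}_n^{(j+1)} = \mathfrak{V}_{n+1}^{(j)} - \mathfrak{V}_n^{(j)} = 0$, which is again constant, whence $j+1 \in \mathbb S$. Thus $j_0 = \min \mathbb S$, and by definition $(\Delta^{j_0} d)_n = \mathfrak{V}_n^{(j_0)}$ is constant while $(\Delta^{j_0 - 1} d)_n = \mathfrak{V}_n^{(j_0 - 1)}$ is not.

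The core of the argument is the finite-difference fact that $(\Delta^{j_0 + 1} d)_n = 0$ for every $n$ if and only if $d_n$ is the restriction to the nonnegative integers of a polynomial in $n$ of degree at most $j_0$. Since $\mathfrak{V}_n^{(j_0)}$ is constant we have $(\Delta^{j_0 + 1} d)_n = 0$, so $d_n$ is such a polynomial; and because $(\Delta^{j_0 - 1} d)_n$ is not constant, the degree is exactly $j_0$. I would prove this either from Newton's forward-difference expansion $d_n = \sum_{k=0}^{j_0} \binom{n}{k} (\Delta^{k} d)_0$, noting that each $\binom{n}{k}$ is a polynomial of degree $k$ in $n$ and that the leading coefficient $(\Delta^{j_0} d)_0 / j_0!$ is nonzero, or by a short downward induction that reconstructs $\Delta^{j_0 - 1} d, \dots, \Delta^0 d = d$ one order at a time starting from the constant $\Delta^{j_0} d$. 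Writing this polynomial as $P(n) = d_n = \sum_{i=0}^{j_0} \alpha_i n^i$ with $\alpha_{j_0} \ne 0$ gives the required degree-$j_0$ polynomial satisfying $b_n^2 - b_{n-1}^2 = P(n)$.

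The proof then closes by telescoping: summing $b_m^2 - b_{m-1}^2 = P(m)$ over $m = 1, \dots, n$ collapses the left-hand side to $b_n^2 - b_0^2$, and rearranging gives $b_n^2 = \sum_{i=1}^n P(i) + b_0^2$, which is precisely the claimed identity. The main obstacle is the finite-difference characterization of polynomial sequences, together with the verification that the minimality of $j_0$ upgrades the degree bound ``$\le j_0$'' to the exact value $j_0$; once those are settled, the identification $\mathfrak{V}_n^{(j)} = (\Delta^{j} d)_n$ and the telescoping summation are routine.
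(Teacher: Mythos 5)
Your proposal is correct and follows essentially the same route as the paper: the paper's proof is precisely your ``downward induction'' alternative --- it iteratively solves $\mathfrak{V}_{n+1}^{(k)}-\mathfrak{V}_n^{(k)}=\mathfrak{V}_n^{(k+1)}$ starting from the constant $\mathfrak{V}_n^{(j_0)}$, raising the polynomial degree by one at each step until $\mathfrak{V}_n^{(0)}=b_n^2-b_{n-1}^2=P(n)$, and then concludes by the same telescoping sum. Your Newton forward-difference packaging of the key fact is just a cleaner citation of the same classical characterization, and your remarks that $\mathbb{S}$ is upward closed (so $j_0=\min\mathbb{S}$) and that minimality of $j_0$ pins down the degree tidy up points the paper leaves implicit.
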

\begin{proof}
Since $\mathfrak{V}_{n+1}^{(j_0-1)}-\mathfrak{V}_n^{(j_0-1)}=  \mathfrak{V}_n^{(j_0)}= constant;\quad n=0,1,....,$ we have the following iteration:
\begin{eqnarray*}
&&\mathfrak{V}_{n+1}^{(j_0-1)}-\mathfrak{V}_n^{(j_0-1)}=\mathfrak{V}_n^{(j_0)}= constant\\
&&\mathfrak{V}_n^{(j_0-1)}= a_1n+a_2\\
&&\mathfrak{V}_{n+1}^{(j_0-2)}-\mathfrak{V}_n^{(j_0-2)}=\mathfrak{V}_n^{(j_0-1)}= a_1n+a_2\\
&&\mathfrak{V}_n^{(j_0-2)}=c_1n^2+c_2n+c_3\\
&&\vdots\;\;\;\;\;\;\vdots\;\;\;\vdots\\
&&\mathfrak{V}_n^{(0)}  = b_n^2-b_{n-1}^2 =P(n)= \sum_{i=0}^{j_0}\alpha_in^i,\quad n=0,1,....
\end{eqnarray*}
which implies that 
\begin{equation}\label{D}
b_n^2=\sum_{i=1}^{n}P(i)+b_0^2
\end{equation}
\end{proof}
\begin{lemma}\label{L3}
There exists a polynomial $Q$ of degree $j_0+1$ such that
\begin{equation}\label{E}
Q(n)-Q(n-1)= P(n);\quad n=1,2,...
\end{equation}
\end{lemma}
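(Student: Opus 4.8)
The plan is to recognize equation (\ref{E}) as a discrete antidifferentiation problem: we must exhibit a polynomial $Q$ whose backward difference $Q(n)-Q(n-1)$ reproduces the given degree-$j_0$ polynomial $P$. The most economical first move is to read the answer off Lemma \ref{L2}. Indeed, (\ref{D}) gives $b_n^2=\sum_{i=1}^{n}P(i)+b_0^2$, and since $\mathfrak{V}_n^{(0)}=b_n^2-b_{n-1}^2=P(n)$, the choice $Q(n):=b_n^2$ already satisfies $Q(n)-Q(n-1)=P(n)$ for $n=1,2,\dots$ by construction. The entire content of the lemma therefore reduces to one assertion: that $n\mapsto b_n^2$ is a genuine polynomial in $n$, of degree precisely $j_0+1$.

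To establish that, I would argue abstractly that the backward-difference operator $(\nabla Q)(n):=Q(n)-Q(n-1)$ maps the space $V_d$ of real polynomials of degree at most $d$ onto $V_{d-1}$. If $Q\in V_d$ has leading term $c\,n^d$ with $c\neq 0$, then $Q(n)-Q(n-1)=c\bigl(n^d-(n-1)^d\bigr)+(\text{lower order})$ has leading term $cd\,n^{d-1}$, so $\nabla$ lowers degree by exactly one and sends $V_d$ into $V_{d-1}$. Its kernel on $V_d$ consists of the constants, so $\dim\ker\nabla=1$, and by rank-nullity $\nabla(V_d)$ has dimension $d=\dim V_{d-1}$; hence $\nabla\colon V_d\to V_{d-1}$ is surjective. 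Taking $d=j_0+1$ yields a $Q\in V_{j_0+1}$ with $\nabla Q=P$, and since $P$ has degree exactly $j_0$ the leading-coefficient computation forces $\deg Q=j_0+1$, giving (\ref{E}).

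Alternatively, and more explicitly, I could build $Q(n)=\sum_{k=0}^{j_0+1}q_k n^k$ by matching coefficients in $Q(n)-Q(n-1)=P(n)$. Expanding $n^k-(n-1)^k$ by the binomial theorem, the coefficient of $n^{j_0}$ on the left is $(j_0+1)q_{j_0+1}$, set equal to $\alpha_{j_0}$; descending through the powers $n^{j_0-1},n^{j_0-2},\dots,n^0$ produces a triangular linear system whose diagonal entries are the nonzero integers $j_0+1,j_0,\dots,1$, so $q_{j_0+1},q_{j_0},\dots,q_1$ are uniquely determined (with $q_0$ free, reflecting the additive constant). Because $q_{j_0+1}=\alpha_{j_0}/(j_0+1)$, the resulting $Q$ has degree $j_0+1$ as required.

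The only genuine subtlety in an otherwise routine argument is confirming that the leading coefficient cannot collapse, i.e. that $\alpha_{j_0}\neq 0$, so that $P$ really has degree $j_0$ and $Q$ really has degree $j_0+1$. This is exactly where the minimality of $j_0=\inf\mathbb S$ from Lemma \ref{L2} is needed: were $\alpha_{j_0}$ to vanish, $P$ would have degree below $j_0$, the difference table would already have stabilized to a constant at level $j_0-1$, and $j_0$ would not be the infimum of $\mathbb S$. With this observation the degree count is forced and the lemma follows.
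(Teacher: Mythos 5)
Your proposal is correct, and its primary argument takes a genuinely different route from the paper's. The paper's proof is precisely your ``alternative'' paragraph: it writes $Q(n)=\sum_{i=0}^{j_0+1}a_in^i$, expands $Q(n)-Q(n-1)$ by the binomial theorem, and turns (\ref{E}) into a lower-triangular linear system $AX=Y$ for $a_{j_0+1},\dots,a_1$ (with $a_0$ left free), concluding solvability from $\det(A)\neq 0$. Your main route instead argues abstractly: the backward-difference operator $\nabla$ maps the space $V_{j_0+1}$ of polynomials of degree at most $j_0+1$ into $V_{j_0}$, lowers degree by exactly one, and has one-dimensional kernel (the constants), so rank-nullity forces $\nabla(V_{j_0+1})=V_{j_0}$ and surjectivity yields $Q$ without ever writing a matrix. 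What the paper's explicit computation buys is the uniqueness of the coefficients up to the additive constant, which your coefficient-matching version also recovers; what your abstract version buys is brevity and immunity to bookkeeping errors --- indeed, your explicit variant is more careful than the paper at exactly the point where the paper slips: since $A$ is triangular with diagonal entries $j_0+1,j_0,\dots,1$, its determinant is $(j_0+1)!$, not the paper's claimed $\frac{1}{2}(j_0+1)(j_0+2)$ (a harmless error, since only nonvanishing is used). Your opening move, taking $Q(n)=b_n^2$, is strictly speaking a detour --- the lemma is a statement purely about polynomials, and knowing that $b_n^2$ satisfies the difference equation does not by itself make it a polynomial --- but you recognize this and place the real work where it belongs. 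Finally, your closing observation that minimality of $j_0$ forces $\alpha_{j_0}\neq 0$, so that $P$ genuinely has degree $j_0$ and $Q$ genuinely has degree $j_0+1$, addresses a gap the paper passes over in silence (the argument needs $j_0\geq 2$ to quote membership in $\mathbb S$, since the paper defines $\mathbb S$ only for $j>0$, but that borderline case is an imprecision inherited from Lemma \ref{L2} itself, not a defect of your proof).
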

\begin{proof}
In order to prove (\ref{E})
let 
$$Q(n)=\sum_{i=0}^{j_0+1}a_in^i;\quad a_i \in \mathbb{R},\quad i=0,1,...,j_0+1.$$
Thereby, the equation (\ref{E}) is 
$$\sum_{i=1}^{j_0+1}a_in^i-\sum_{i=1}^{j_0+1}a_i\left(\sum_{k=0}^{i}\left(\begin{array}{c}i\\ k\end{array}\right)n^k(-1)^{i-k}\right)=\sum_{i=0}^{j_0}\alpha_in^i,$$
which is equivalent to the linear system
\begin{equation}\label{eq}
AX=Y,
\end{equation}
where the $(j_0+1)\times(j_0+1)$ matrix $A$ is given by
\begin{equation*}
A=\left(\begin{array}{cccccc}
C_{j_{0}+1}^{j_{0}}(-1)^{}&0&0&0&\hdots &0\\
C_{j_{0}+1}^{j_{0}-1}(-1)^{2}&C_{j_{0}}^{j_{0}-1}(-1)^{}&0&0&\hdots &0\\
C_{j_{0}+1}^{j_{0}-2}(-1)^{3}&C_{j_{0}}^{j_{0}-2}(-1)^{2}&C_{j_{0}-1}^{j_{0}-2}(-1)^{}&0&\hdots &0\\
\vdots&\vdots&\vdots&\vdots&\vdots&\vdots\\
\vdots&\vdots&\vdots&\vdots&\vdots&\vdots\\
\vdots&\vdots&\vdots&\vdots&\vdots&\vdots\\
C_{j_{0}+1}^{1}(-1)^{j_0}&C_{j_{0}}^{1}(-1)^{j_0-1}&C_{j_{0}-1}^{1}(-1)^{j_0-2}&\hdots &C_{2}^{1}(-1)&0\\
C_{j_{0}+1}^{0}(-1)^{j_0+1}&C_{j_{0}}^{0}(-1)^{j_0}&C_{j_{0}-1}^{0}(-1)^{j_0-2}&\hdots &\hdots&C_{1}^{0}(-1)
\end{array}
\right),
\end{equation*}
\begin{eqnarray*}
X&=&\left(\begin{array}{cccccccc}
a_{j_0+1}&
a_{j_0}&
\hdots&
\hdots&
\hdots&
\hdots&
a_{2}&
a_{1}\end{array}\right)^{T}\;\;\text{and}\\ 
Y&=&\left(\begin{array}{cccccccc}
\alpha_{j_0}&
\alpha_{j_0-1}&
\hdots&
\hdots&
\hdots&
\hdots&
\alpha_{1}&
\alpha_{0}\end{array}\right)^{T},
\end{eqnarray*}
and where $C_k^l=-\left(\begin{array}{c}k\\ l\end{array}\right)$, the binomial coefficient. Since the determinant, $\det(A)=\frac{1}{2}(j_0+1)(j_0+2)\neq 0$, the reals $a_i,\quad i=1,2,...,j_0+1$ are uniquely determined and $a_0$ can take any real value. This proves the lemma.
\end{proof}
\begin{remark}\label{Rem2}
Since the algebras corresponding to $b_n^2=\sum_{i=0}^{p}\theta_i n^i $ ($\theta_i\in\mathbb{R}$) and
$b_n^2=n^p+\sum_{i=0}^{p-1}\gamma_i n^i $ are isomorphic, we shall use either one as convenient.
\end{remark}
\subsection{Proof of Theorem \ref{t1}}
Suppose the algebra is of finite dimension. Then by Lemma \ref{L1} we have
$$\mathfrak{V}_n^{(j)}=constant;\quad n=0,1,2,....;\quad j>0.$$
Thereby in Lemma \ref{L2}, $j_0=1$ and
$$b_n^2=\sum_{i=1}^nP(i)+b_0^2\quad\text{with}\quad P(i)=\sum_{k=0}^1\alpha_ki^k;\quad n=0,1,2,...$$
Thus
\begin{eqnarray*}
b_n^2&=&\sum_{i=1}^{n}(\alpha_0+\alpha_1i)+b_0^2\\
&=&\alpha_0n+\alpha_1\frac{n(n+1)}{2}+b_0^2\\
&=&\frac{\alpha_1}{2}n^2+(\alpha_0+\frac{\alpha_1}{2})n+b_0^2,
\end{eqnarray*}
which is a second degree polynomial in $n$ and choices of $\alpha$'s can reduce it to a first or zero degree polynomial as well.\\

To prove the converse part, let $b_n^2$ have the following form,
\begin{equation}
b_n^2=Q(n)=\sum_{i=0}^{p}\theta_in^i;\quad \theta_i \in \mathbb{R},\quad i=0,1,...,p
 \end{equation} 
and let $\mathcal{A}(p)$ denotes the corresponding algebra. We prove the converse by cases.\\
{\bf The case p=0:}
If $b_n^2=\text{constant}\; \neq 0\;\:\;n=0,1,2,... $,  then we have $$[N,A^{\dagger}]=A^{\dagger},\;\;[N,A]=-A,\:\:[A,A^{\dagger}]=0.$$
So, the algebra is generated by the set of operators $\{ A^+, A, N, I  \}$ is of finite dimension, and  the dimension of $\mathcal{A}(0)$ is four.\\
{\bf The case p=1:}
If $b_n^2=n+\alpha$, then we have
 $$[N,A^{\dagger}]=A^{\dagger},\;\;[N,A]=-A,\:\:[A,A^{\dagger}]=2I,$$
the set of generators of the corresponding algebra is $\{ A^+, A, N, I  \}$. Therefore, the dimension of $\mathcal{A}(1)$ is four. In fact, this algebra is isomorphic to the harmonic oscillator  algebra which is obtained when $b_n^2= \frac{n+1}{2}.$\\
{\bf The case p=2:}
If
$b_{n}^{2}=n^2+\alpha n+\beta;\quad n=0,1,2,...,$
then
$$[N,A^{\dagger}]=A^{\dagger},\;\;[N,A]=-A$$
and $$\;\;[A,A^{\dagger}]\Psi_n=2(2n-1+\alpha)\Psi_n\;\;\Longrightarrow\;\;[A,A^{\dagger}]=4N+2(\alpha-1)I\in\A(2).$$
Thereby the set of generators of the corresponding algebra is $\{ A^+, A, N, I\}$. Hence, as the dimension of $\mathcal{A}(1)$ is four, the dimension of $ \mathcal{A}(2)$ is also four. Thus the algebra $\A(2)$ is also isomorphic to the harmonic oscillator  algebra.\\
{\bf The case p=3:}
If $b_n^2=n^3 +\alpha n^2 +\beta n + \gamma,\:\;\;\forall n \in\mathbb N, $ then we have
$$[N,A^{\dagger}]=A^{\dagger},\;\;[N,A]=-A\quad\text{and}$$ $$[A,A^{\dagger}]=6N^2+2(2\alpha-3)N+2(1+\beta -\alpha)I.$$
Since $N$ and $I$ are elements of the algebra, $N^2$ is also in the algebra.
Let us prove that for any $m\in \mathbb N$, $(A^{\dagger})^m$ is in the algebra.
For, since $$[N^2,A^{\dagger}]=2A^{\dagger}N+A^{\dagger},\quad\text{and}\quad N^2, A^{\dagger}\in\A(3),$$ then $A^{\dagger}N$ is in the algebra.
Let us prove, by induction, that for any $m\in \mathbb N$, $$[{A^{\dagger}}^{m},A^{\dagger}N]=-m{A^{\dagger}}^{m+1}.$$
For $m=1$, we have 
\begin{eqnarray}
[{A^{\dagger}},A^{\dagger}N]&=& {A^{\dagger}}^{2}N-{A^{\dagger}}N{A^{\dagger}}\nonumber\\
&=&-{A^{\dagger}}[N,{A^{\dagger}}]=-{A^{\dagger}}^{2}.\nonumber
\end{eqnarray}
Therefore ${A^{\dagger}}^{2}\in\A(3)$.
Suppose that for $m\geq 1$, we have $$[{A^{\dagger}}^{m},A^{\dagger}N]=-m{A^{\dagger}}^{m+1} \Longrightarrow\;\;{A^{\dagger}}^{m+1}N-{A^{\dagger}}N{A^{\dagger}}^{m}=-m{A^{\dagger}}^{m+1}.$$
Now for $m+1$, we have
\begin{eqnarray}
[{A^{\dagger}}^{m+1},A^{\dagger}N]&=&{A^{\dagger}}^{m+2}N-{A^{\dagger}}N{A^{\dagger}}^{m+1}\nonumber\\
&=& {A^{\dagger}}^{m+2}N- {A^{\dagger}}^{m+1}N{A^{\dagger}} + {A^{\dagger}}^{m+1}N{A^{\dagger}}-    {A^{\dagger}}N{A^{\dagger}}^{m+1}\nonumber\\
&=& {A^{\dagger}}^{m+1}\left({A^{\dagger}}N-N{A^{\dagger}}\right)+ \left({A^{\dagger}}^{m+1}N-{A^{\dagger}}N{A^{\dagger}}^{m}\right){A^{\dagger}}\nonumber\\
&=& -{A^{\dagger}}^{m+2}-m{A^{\dagger}}^{m+1}{A^{\dagger}}\nonumber\\
&=& -(m+1){A^{\dagger}}^{m+2}.\nonumber
\end{eqnarray}
Therefore by induction $\{{A^{\dagger}}^{m}\;:\;m\geq 1\}\subset\mathcal A(3)$.  Thus, the algebra $\mathcal A(3)$  is of infinite dimension.\\
{\bf General case:} In general, let us prove that $\mathcal A(p)$ is of infinite dimension for $p\geq 3$. 
Let $b_n$ have the following form

\begin{equation}
b_n^2=Q(n)=\sum_{i=0}^{p}\alpha_in^i;\quad \alpha_i \in \mathbb R, \quad i=0,1,...,p.
\end{equation} 
It is easy to see that 
\begin{equation}\label{cr}
[A,A^{\dagger}]=\sum_{i=0}^{p-1}\theta_iN^i= \theta_{p-1}N^{p-1}+...+ \theta_{2}N^{2}+\theta_{1}N+\theta_{0}I,
\end{equation}
where $\theta_i \in \mathbb R,$ $i=0,1,...,p-1$.
Since $N,I\in\A(p)$,

\begin{equation}\label{q0}
\mathcal W^{0}=N^{p-1}+\gamma_{p-2}N^{p-2}+...+ \gamma_{2}N^{2}\in\A(p),
\end{equation} 
where $\gamma_{p-1}=1$, and $\gamma_i =\frac{\theta_{i}}{\theta_{p-1}};\quad i=2,...,p-2$. 
The following commutation relations can easily be computed:
\begin{equation}\label{q1}
[N^2,A^{\dagger}]=2A^{\dagger}N+A^{\dagger}
\end{equation}
\begin{equation}\label{q2}
[N^3,A^{\dagger}]=3A^{\dagger}N^2+3A^{\dagger}N+A^{\dagger}
\end{equation}
\begin{equation}\label{q3}
[N^4,A^{\dagger}]=4A^{\dagger}N^3+6A^{\dagger}N^2+4A^{\dagger}N+A^{\dagger}.
\end{equation}
That is, in general we have
\begin{equation}\label{q4}
[N^k,A^{\dagger}]= \sum_{i=1}^{k}c^i_kA^{\dagger}N^{k-i},\quad\text{where}~~c_k^i\in\mathbb{R}.
\end{equation}
Using (\ref{q0}) and (\ref{q4}), we have 
\begin{equation}\label{q10}
\mathcal W^{+}:= [\mathcal W^{0},A^{\dagger}]=\sum_{i=2}^{p-1}\sum_{j=1}^{i}\gamma_ic^{j}_{i} A^{\dagger}N^{i-j},
\end{equation}
Since $\mathcal W^{0}\in\A(p)$, we have $\mathcal W^{+} \in \mathcal{A}(p).$ 
After $(p-3)$-iterations it can be seen that
\begin{eqnarray}\label{rq1}
\mathcal W^+_{(p-2)}&:=&\left[A^{\dagger}...\left[A^{\dagger},\left[A^{\dagger}, \mathcal W^{+}\right] \right]... \right]\\
&=&(-1)^{p-1}(p-1)!A^{\dagger (p-2)}N\nonumber
+f(p)A^{\dagger (p-2)},
\end{eqnarray}
where $f(p)$ is some function of $p.$
Since $\mathcal W^+\in\A(p)$, we get $\mathcal W^+_{(p-2)}\in\A(p)$. 
Further, the following commutation relation can easily be verified by induction
\begin{equation}\label{cq2}
\left[A^{\dagger m},A^{\dagger (p-2)}N\right]= -mA^{\dagger (p-2+m)},\:\: m\geq 1.
\end{equation}
Now, (\ref{rq1}) and (\ref{cq2}) imply that
\begin{eqnarray}\label{cq3}
\left[A^{\dagger},\mathcal W^+_{(p-2)}\right]&=&(-1)^{p-1}(p-1)!\left[A^{\dagger},A^{\dagger (p-2)}N\right]\nonumber\\
&=&(-1)^{p}(p-1)!A^{\dagger (p-1)}.
\end{eqnarray}
Thereby, since $A^{\dagger}, \mathcal W^+_{(p-2)}\in\A(p)$, we see $A^{\dagger (p-1)}\in\mathcal{A}(p).$
Again using the relation (\ref{cq2}), we get
\begin{eqnarray}\label{cq4}
\left[A^{\dagger (p-1)},\mathcal W^+_{(p-2)}\right]&=&(-1)^{p-1}(p-1)!\left[A^{\dagger (p-1)},A^{\dagger (p-2)}N\right]\nonumber\\
&=& (-1)^{p}(p-1)!(p-1)A^{\dagger (2p-3)},\:\: m\geq 1;
\end{eqnarray}
Thereby, $A^{\dagger {(2p-3)}}  = A^{\dagger {(p-1)+(p-2)}}\in\mathcal{A}(p).$
By iteration, we can prove that $A^{\dagger {(p-1)+m(p-2)}}\in\mathcal{A}(p)$ for every $m\geq 1$. Further, for $p\geq 3$, the operators $A^{\dagger {(p-1)+m(p-2)}}$ are new elements of $\mathcal{A}(p)$ for every $m\geq 1$.
Thereby $\mathcal{A}(p)$ is of infinite dimension.

\subsection{Proof of Corollary \ref{c1}}
 The proof follows from the cases $p=0, 1, 2.$
\section{Some examples}
 \subsection{Hermite Polynomials:}
The Hermite polynomials  $H_n(x)$ are given by \cite{key14, key15, key16},
\begin{equation}\label{e8}
H_n(x)= n!\sum_{\nu=0}^{[\frac{n}{2}]}\frac{(-1)^\nu(2x)^{n-2\nu}}{\nu !(n-2\nu)!}.
\end{equation}
These polynomials are orthogonal in the Hilbert space
$\mathcal{H}= L^2\left(\mathbb R; \frac{1}{\sqrt{\pi}}\exp{(-x^2)}dx\right).$
The normalized polynomials $\{\Psi_n(x) \}_{n=0}^\infty$ takes the form
\begin{eqnarray}\label{e88}
\Psi_n(x)&=&\pi^\frac{1}{4} d_n^{-1}H_n(x),\quad \\
d_n&=& \left(2^n n!\sqrt{\pi}  \right)^{\frac{1}{2}},\quad n\geq 0.
\end{eqnarray}
The three term recurrence relations for the Hermite polynomials (for example see \cite{key14}), that is  the formula (\ref{Re}), is satisfied with 
\begin{eqnarray}
b_n=\frac{1}{2}\left(\frac{d_{n+1}}{d_n}  \right)= \sqrt{\frac{n+1}{2}}.
\end{eqnarray}
It is well know that the oscillator algebra generalized by this family of orthogonal polynomials is of finite dimension and we can see that $b_n^2$ has the form (\ref{T1}).
\subsection{The Legendre Polynomials:}
The Legendre polynomials are defined by 
\begin{equation}
P_n(x)={}_2F_1\left(-n, n+1;1;\frac{1-x}{2}\right).
\end{equation}
and these polynomials are orthogonal in the Hilbert space $\mathcal H=L^2([-1,1]; 2^{-1}dx)$.
The normalized polynomials $\{\Psi_n(x)_{n=0}^\infty  \}$ are given by
\begin{equation}
\Psi_n(x)=\sqrt{2}\Phi_n(x),\quad                 \Phi_n(x)=\sqrt{\frac{2n+1}{2}}P_n(x),\quad n\geq 0.
\end{equation}
Taking into account the recurrence relations for the Legendre polynomials (see \cite{key3}) we obtain the three term recurrence relation (\ref{Re}) with 
\begin{eqnarray}
b_n= \sqrt{\frac{(n+1)^2}{(2n+1)(2n+3)}},\quad n\geq 0.
\end{eqnarray}
In this case, $b_n^2$ does not take the form (\ref{T1}). Thereby the generalized oscillator algebra generated by the Legendre polynomials is of infinite dimension.
\subsection{The Gegenbauer Polynomials} These polynomials are orthogonal in the Hilbert space
$\mathcal H=L^2\left([-1,1]; (d_0(\alpha))^{-2}(1-x^2)^\alpha dx \right)$ 
where 
\begin{equation}
(d_0(\alpha))^{2}= 2^{2\alpha +1}\frac{(\Gamma(\alpha +1))^2}{(\Gamma(2\alpha +2))}.
\end{equation}
The ultraspherical polynomials are defined by the hypergeometric function \cite{AI}
\begin{equation}
P_n^{(\alpha, \alpha)}(x)=\frac{(\alpha +1)_n}{n!}{}_2F_1\left(-n, n+2\alpha +1;\alpha +1;\frac{1-x}{2}\right),
\end{equation}
where Pochhammer-symbol $(\beta)_n$ is defined by $(\beta)_0=1,$ and $(\beta)_n=\beta(\beta +1)(\beta +2)...(\beta +n -1);\: \: n\geq 1.$ For $\alpha >-1$ the following orthogonal relation is valid
$$\int_{-1}^1 P_n^{(\alpha, \alpha)}(x)P_m^{(\alpha, \alpha)}(x)(1-x^2)^{\alpha} dx=d_n^2\delta_{mn},\:\: n, m\geq 0, $$ with the constant of normalization $d_n$  given by 
$$d_n^2= \frac{2^{2\alpha +1}\left( \Gamma(n+2\alpha +1)  \right)^2}{(2n+2\alpha +1)n!\Gamma(n+\alpha +1)},\:\:n\geq 0.$$
For $\alpha=\lambda-2^{-1},\:\lambda >-2^{-1}$ and $n\geq 0$, the Gegenbauer polynomials are defined as \cite{key14}
\begin{equation*}
P_n^{(\lambda)}(x)=\frac{\Gamma(n+1)\Gamma(n+2\alpha +1)}{\Gamma(2\alpha +1)\Gamma(n+\alpha +1)}P_n^{(\alpha, \alpha)}(x).\nonumber
\end{equation*}
The normalized polynomials are given by \cite{keyb}
\begin{equation}
\Psi_n(x)=d_0d_n^{-1}P_n^{(\alpha, \alpha)}(x)
\end{equation}
The functions $\Psi_n(x)$ satisfy the relation (\ref{Re}) with 
\begin{equation}
b_n^2= \frac{(n+1)(n+2\alpha +1)}{(2n+2\alpha +1)(2n+2\alpha +3)};\quad n\geq 0,\:\: b_{-1}=0.
\end{equation}
It is clear that this $b_n^2$ is strictly different from (\ref{T1}). Thereby, the oscillator algebra associated with this family of orthogonal polynomials is of infinite dimension.
\section{Reduced algebras in multi-boson systems}
The dynamics of $(N+1)$-boson system is assumed to be governed by a Hamiltonian operator of the form \cite{RH,AH}:
\begin{eqnarray}\label{RH}
H &=& h_0(a_0^*a_0,...,a_N^*a_N) + g_0(a_0^*a_0,...,a_N^*a_N)a_0^{k_0}\cdots a_N^{k_N}
\\ &+& a_0^{-k_0}\cdots a_N^{-k_N}\overline{g}_0(a_0^*a_0,...,a_N^*a_N)\nonumber,
\end{eqnarray}
where $(a_0,...,a_N)$ and $(a_0^*,...,a_N^*)$ are bosonic annihilation and  creation operators respectively with standard Heisenberg commutation relations \cite{RH}.
The monomial $a_0^{k_0}\cdots a_N^{k_N},\:\: k_0,...,k_N \in {\mathbb Z}$  can be thought of as an operator which describes the subsequent creation and annihilation of the clusters of the bosonic modes.
The operator $g_0(a_0^*a_0,....,a_N^*a_N)$ is a kind of generalization of the coupling constant. The operator $h_0(a_0^*a_0,....,a_N^*a_N)$ can be chosen as a free Hamiltonian being a weighted sum of the occupation number operators of the elementary modes $a_0^*a_0,...,a_N^*a_N$.

The reduced algebra associated with this system in a reduced Hilbert space $\mathcal H_{\lambda_1,...,\lambda_N}$, which is formed by the orthonormal vectors $|\lambda_0,\lambda_1,...,\lambda_N\rangle$, is denoted by  $\A_{red}$. In order to analyze the quantum system described by the Hamiltonian (\ref{RH}), the following operators were introduced in \cite{RH}.
\begin{eqnarray}
A&:=&{g}_0(a_0^*a_0,....,a_N^*a_N)a_0^{k_0}...a_N^{k_N}\label{a},\\
A_i&=&A_i^{*}:=\sum_{j=0}^{N}\alpha_{ij}a_j^*a_j\label{b}.
\end{eqnarray}
and
\begin{eqnarray*}
A^*A&=&|{g}_0(a_0^*a_0-k_0,....,a_N^*a_N-k_N)|^2\mathcal{P}_{k_0}(a_0^*a_0-k_0)\cdots\mathcal{P}_{k_N}(a_N^*a_N-k_N),\\
AA^*&=&|{g}_0(a_0^*a_0,....,a_N^*a_N)|^2\mathcal{P}_{k_0}(a_0^*a_0)\cdots\mathcal{P}_{k_N}(a_N^*a_N),
\end{eqnarray*}
where
$$\mathcal{P}_k(a^*a)=a^ka^{-k}=\left\{\begin{array}{ccc}
a^k(a^*)^k=(a^*a+1)\cdots(a^*a+k)&\text{if}&k>0\\
1&\text{if}&k=0\\
(a^*)^{-k}a^{-k}=a^*a(a^*a-1)\cdots(a^*a-k+1)&\text{if}&k<0\end{array}.\right.$$
The operators $A_0, A$ and $A^*$ satisfy the commutation relations
$$[A_0,A]=-A,\quad [A_0,A^*]=A^*$$
and act on the basis vectors as
\begin{eqnarray}
A_0|\lambda_0,\lambda_1,...,\lambda_N\rangle&=&\lambda_0|\lambda_0,\lambda_1,...,\lambda_N\rangle\label{RAX}\\
A|\lambda_0,\lambda_1,...,\lambda_N\rangle&=&\sqrt{\mathcal G (\lambda_0-1,\lambda_1,...,\lambda_N)}|\lambda_0-1,\lambda_1,...,\lambda_N\rangle\label{RAY}\\
A^*|\lambda_0,\lambda_1,...,\lambda_N\rangle&=&\sqrt{\mathcal G (\lambda_0,\lambda_1,...,\lambda_N)}|\lambda_0+1,\lambda_1,...,\lambda_N\rangle.\label{RAZ}
\end{eqnarray}
Replacing the occupation numbers $a_0^*a_0,...,a_N^*a_N$ by the operators $A_0,A_1,$ $...,A_N$ one obtains
\begin{eqnarray}
A^*A&=&\mathcal{G}(A_0-1,A_1,...,A_N)\\
AA^*&=&\mathcal{G}(A_0,A_1,...,A_N),
\end{eqnarray}
where $\mathcal G $ is uniquely determined by $g_0$, the polynomials $\mathcal P_{k_0}, ...,\mathcal P_{k_N}$ and the linear map (\ref{b}). The reduced algebra $\A_{red}$ is generated by the operators  $A_0, A$, $A^*$ and $I$. The operators $A^*A$ and $AA^*$ are diagonal in the standard Fock basis,
$$|n_0,n_1,...,n_N\rangle=\frac{1}{\sqrt{n_0!\cdots n_N!}}(a_0^*)^{n_0}\cdots (a_N^*)^{n_N}|0\rangle.$$
The maximal system of commuting observables is diagonalized in the Fock basis and the eigenvalues of $A_0,A_1,...,A_N$ on $|n_0,n_1,...,n_N\rangle$ are given by
$$\lambda_i=\sum_{j=0}^{N}\alpha_{ij}n_j;\quad i=0,1,...,N.$$

In this section our aim is to give the necessary and sufficient form of the Hamiltonian(\ref{RH}) for the associated reduced algebra to be of finite dimension.
From the equations (\ref{RAX})-(\ref{RAZ}) one can see that the actions of $A_0, A$ and $A^*$ on the vector $|\lambda_0,\lambda_1,...,\lambda_N\rangle$  do not affect the parameters $\lambda_1,...,\lambda_N$. So by letting 
$b_{\lambda_0}=\sqrt{\mathcal G (\lambda_0,\lambda_1,...,\lambda_N)}$ and $\Psi_{\lambda_0}= |\lambda_0,\lambda_1,...,\lambda_N\rangle$ the  equations (\ref{RAX})-(\ref{RAZ}) can be rewritten as
\begin{eqnarray}\label{RAA}
A_0\Psi_{\lambda_0}&=&\lambda_0\Psi_{\lambda_0}\\
A\Psi_{\lambda_0}&=&b_{\lambda_0-1}\Psi_{\lambda_0-1}\\
A^*\Psi_{\lambda_0}&=&b_{\lambda_0}\Psi_{\lambda_0+1},
\end{eqnarray}
According to Theorem \ref{t1} the algebra generated by $\{I, A_0, A, A^*   \}$ is of  finite dimension if and only if  $b^2_{\lambda_0}$ is a polynomial of degree two in ${\lambda_0}$. But, since $b^2_{\lambda_0}$ depends on $\lambda_1,...,\lambda_{N}$, the coefficients of the polynomial must be  functions of $\lambda_1,...,\lambda_{N}$ . Hence, the necessary and sufficient condition for the reduced algebra to be of  finite dimension is  that the function $\mathcal G$ must have the following form: 
 \begin{equation}\label{RA-1}
b^2_{\lambda_0}=\mathcal G(\lambda_0,\lambda_1,...,\lambda_N)=\lambda_0^2\mathcal U_0(\lambda_1,...,\lambda_N) + \lambda_0\mathcal U_1(\lambda_1,...,\lambda_N)+\mathcal U_2(\lambda_1,...,\lambda_N),
\end{equation}
where $\mathcal U_i;~\:\:i=0,1,2$ are $N$ variables real valued functions.
Thereby, the complex valued function $g$ defined in (\ref{RH}) must satisfy the following condition (we consider it as a function of variables $x_1,...,x_N$)
\begin{eqnarray}\label{RC}
& &|g_0(x_0,...,x_N)|^2\mathcal P_{k_0}(x_0)...\mathcal P_{k_N}(x_N)\\
& &\quad=\lambda_0^2\mathcal U_0(\lambda_1,...,\lambda_N) + \lambda_0\mathcal U_1(\lambda_1,...,\lambda_N)+\mathcal U_2(\lambda_1,...,\lambda_N),\nonumber
\end{eqnarray}
where the polyomials $\mathcal P_{k}$ are defined as follows
\begin{equation}\label{d}
\mathcal P_{k}(x) = \left \{ \begin{array}{ll}
(x+1)...(x+k),\qquad \quad \:\:\mbox{if} \qquad k>0 \\
1\:\: \qquad \qquad \qquad\qquad\quad \quad\mbox{if}\qquad k=0\\
x(x-1)...(x-k+1),\quad \:\mbox{if}\qquad k<0 ,\\
\end{array}\right.
\end{equation}
and $$\lambda_i=\sum_{j=0}^N \alpha_{ij}x_j;~~i=0,1,2,...N.$$
As an example, let us work with the two mode Hamiltonian 
\begin{eqnarray}\label{RH2}
H &=& h_0(a_0^*a_0,a_1^*a_1) + g_0(a_0^*a_0,a_1^*a_1)a_0^{k_0}a_1^{k_1}
\\ & &\quad+ a_0^{-k_0}a_1^{-k_1}\overline{g}_0(a_0^*a_0,a_1^*a_1).\nonumber
\end{eqnarray}
For this Hamiltonian, the condition (\ref{RC}) is reduced to
\begin{eqnarray}\label{RC2}
|g_0(x_0,x_1)|^2\mathcal P_{k_0}(x_0)\mathcal P_{k_1}(x_1)&=&\left( \alpha_{0,0}x_0 + \alpha_{0,1}x_1\right)^2\mathcal U_0\left( \alpha_{1,0}x_0+ \alpha_{1,1}x_1\right)\nonumber\\ 
& &+\left( \alpha_{0,0}x_0 + \alpha_{0,1}x_1\right)\mathcal U_1\left(\alpha_{1,0}x_0+ \alpha_{1,1}x_1\right)\nonumber\\
& &\qquad+\mathcal U_2\left(\alpha_{1,0}x_0+ \alpha_{1,1}x_1\right).
\end{eqnarray}
Thereby, the necessary and sufficient condition for the reduction algebra to be isomorphic to the harmonic oscillator algebra is the following: 
\begin{eqnarray}\label{RC2O}
\mathcal U_0:=0.
\end{eqnarray}
In the case, $k_0,\:k_1\geq 0$, we have 
\begin{equation}\label{CA1}
|g_0(x_0,x_1)|^2=\frac{\left( \alpha_{0,0}x_0 + \alpha_{0,1}x_1\right)\cdot\mathcal U_1\left( \alpha_{1,0}x_0+ \alpha_{1,1}x_1\right)+\mathcal U_2\left(\alpha_{1,0}x_0+ \alpha_{1,1}x_1\right)}{\mathcal P_{k_0}(x_0)\mathcal P_{k_1}(x_1)}.
\end{equation}
Further, we suppose $g$ to be null function in the set where the expression (\ref{CA1}) is not defined.
For simplicity in (\ref{CA1}) we may take $\mathcal U_1=1$ and $\mathcal U_2=c$, some constant; then the necessary and sufficient condition for the reduction algebra to be isomorphic to the harmonic oscillator algebra becomes
\begin{equation}\label{CA2}
|g_0(x_0,x_1)|^2=\frac{\left( \alpha_{0,0}x_0 + \alpha_{0,1}x_1\right)+c}{\mathcal P_{k_0}(x_0)\mathcal P_{k_1}(x_1)},
\end{equation}
 which implies that 
\begin{equation}\label{CA3}
g_0(x,y)=\exp(i\theta)\left[\frac{\left( \alpha_{0,0}x + \alpha_{0,1}y\right)+c}{\mathcal P_{k_0}(x)\mathcal P_{k_1}(y)}\right]^{\frac{1}{2}}.
\end{equation}
Now let us look at a particular example given in \cite{AH}, where 
\begin{eqnarray}\label{RH22}
H &=& h_0(a_0^*a_0,a_1^*a_1) + g_0(a_0^*a_0,a_1^*a_1)a_0^{k_0}a_1^{*k_1}
\\ & &\qquad + a_0^{*k_0}a_1^{k_1}\overline{g}_0(a_0^*a_0,a_1^*a_1),\nonumber
\end{eqnarray}
with the matrix elements 
\begin{equation}
\alpha_{0,0}= \frac{1}{k_0},\quad \alpha_{0,1}= 0,\quad \alpha_{1,0}= k_1, \quad \alpha_{1,1}= k_0.
\end{equation}
In this case, according to (\ref{CA3}), the necessary and sufficient condition for the algebra, $\A_{red}$ to be isomorphic to the harmonic oscillator algebra is that the function $g$ must have the following form:
\begin{equation}\label{CA5}
g_0(x,y)=\exp(i\theta)\left[\frac{x+k_0c}{k_0\prod_{i=1}^{k_0}(x+i)\prod_{j=0}^{k_1-1}(y-j)}\right]^{\frac{1}{2}}.
\end{equation}
\section{conclusion}
In this paper, the case of polynomials that are orthogonal with respect to a symmetric measure satisfying three terms recurrence relations has been studied. The case of more general polynomials will be considered in the future. Further, the normalized 2D-Hermite polynomials $H_{n,m}(z,\overline{z})$ satisfy the three term recurrence relation \cite{Wu1, GA}
\begin{equation}\label{He1}
zH_{m,n}(z,\overline{z})=\sqrt{m+1}H_{m+1,n}(z,\overline{z})+\sqrt{n}H_{m,n-1}(z,\overline{z}).
\end{equation}
The quaternionic extension of these 2D Hermite polynomials, introduced in \cite{Thi1, Thi2}, $H_{m,n}({\bf{q}},\overline{{\bf{q}}})$ also satisfy the same recurrence relation with the complex number $z$ replaced by a quaternion ${\bf{q}}$ in (\ref{He1}). The 2D-Zernike polynomials also satisfy a three term recurrence relation of the form \cite{Wu2}
$$zP_{m,n}^{\alpha}(z,\overline{z})=a_{m,n}P_{m+1,n}^{\alpha}(z,\overline{z})
+b_{m,n}P_{m,n-1}^{\alpha}(z,\overline{z}).$$
The recurrence relations of these 2D polynomials do not apply to (\ref{Re}), and thereby the theory developed in this note does not fit these polynomials. In this regard, for these algebras and their dimension-wise classification, a separate theory needs to be established for the 2D polynomials. We shall consider these issues in our future work.
\begin{center}
{\bf{Acknowledgements}}
\end{center}
The authors would like to thank S. Twareque Ali for suggesting the problem of this note and for the valuable discussions on the early stage of this manuscript. We would also like to thank an anonymous referee for his valuable comments.

\end{document}